\documentclass[11pt]{article}
\usepackage[margin=1in]{geometry}
\usepackage{amsmath,amsthm,amssymb}
\usepackage{algorithm,algorithmic,soul}
\usepackage[fleqn,tbtags]{mathtools}
\usepackage{bbm}
\usepackage{color}
\usepackage{tikz}
\usepackage{subcaption}
\allowdisplaybreaks

\usepackage[round]{natbib} 

\newtheorem{lemma}{Lemma}[section]

\newtheorem{theorem}[lemma]{Theorem}
\newtheorem{corollary}[lemma]{Corollary}
\newtheorem{definition}[lemma]{Definition}
\newtheorem{example1}[lemma]{Example}
\newtheorem{rem1}[lemma]{Remark}

\newtheorem{alg1}[lemma]{Algorithm}
\newtheorem{me1}[lemma]{Mechanism}
\newenvironment{remark}{\begin{rem1}\rm}{\end{rem1}}
\newenvironment{example}{\begin{example1}\rm}{\end{example1}}

\numberwithin{equation}{section}
\newcommand{\R}{\mathbb{R}}

\DeclareMathOperator{\cl}{cl}

\DeclareMathOperator{\interior}{int}
\DeclareMathOperator{\dom}{dom}

\DeclareMathOperator{\recc}{recc}


\renewcommand{\O}{\Omega}
\renewcommand{\o}{\omega}
\newcommand{\G}{\mathcal{G}}
\newcommand{\F}{\mathcal{F}}

\newcommand{\N}{\mathbb{N}}

\renewcommand{\Pr}{\mathbb{P}}

\newcommand{\of}[1]{\ensuremath{\left( #1 \right)}}
\newcommand{\cb}[1]{\ensuremath{ \left\{ #1 \right\} }}
\newcommand{\sqb}[1]{\ensuremath{ \left[ #1 \right] }}
\newcommand{\norm}[1]{\ensuremath{ \left\Vert #1 \right\Vert }}

\def\prehp(#1,#2){\ensuremath{  #1 \cdot #2 }}

\newcommand{\E}{\mathbb{E}}

\newcommand{\T}{\top}

\usepackage{amsopn}
\usepackage[capitalize,nameinlink]{cleveref}[0.19]

\begin{document}

\title{On the Separability of Vector-Valued Risk Measures}
\author{\c{C}a\u{g}{\i}n Ararat\thanks{Bilkent University, Department of Industrial Engineering, Ankara, Turkey, cararat@bilkent.edu.tr.}
	\and
Zachary Feinstein\thanks{Stevens Institute of Technology, School of Business, Hoboken, NJ, USA, zfeinste@stevens.edu.}
}
\date{\today}
\maketitle

\begin{abstract}
Risk measures for random vectors have been considered in multi-asset markets with transaction costs and financial networks in the literature. While the theory of set-valued risk measures provide an axiomatic framework for assigning to a random vector its set of all capital requirements or allocation vectors, the actual decision-making process requires an additional rule to select from this set. In this paper, we define vector-valued risk measures by an analogous list of axioms and show that, in the convex and lower semicontinuous case, such functionals always ignore the dependence structures of the input random vectors. We also show that set-valued risk measures do not have this issue as long as they do not reduce to a vector-valued functional. Finally, we demonstrate that our results also generalize to the conditional setting. 
These results imply that convex vector-valued risk measures are not suitable for defining capital allocation rules for a wide range of financial applications including systemic risk measures.\\~\\
{\bf Key words:} vector-valued risk measure, capital allocation rule, set-valued risk measure, systemic risk measure, duality\\
{\bf MSC Codes:} 26E25, 46A20, 46N10, 91G45, 91G70
\end{abstract}

\section{Introduction}\label{sec:intro}

Measuring the risks of random vectors were initially motivated by their relevance for multi-asset markets with proportional transaction costs. In such markets, uncertain financial positions are denoted in the physical units of the underlying assets, hence they are modeled as random vectors. As introduced in \citet{jmt} in the coherent case and further developed in \citet{hh}, \citet{hhr} in the convex case, a set-valued risk measure assigns a set of deterministic capital requirement vectors to the financial position that make it acceptable. We refer the reader to \citet{bentahar}, \citet{cascos} for alternative approaches for defining set-valued risk measures and \citet{zachsurvey} for a  comparison of these approaches.

Later, a similar risk measurement problem showed up in the setting of financial networks, where random vectors are used for modeling random shocks that affect the ability of the member institutions to meet their obligations to their creditors. To that end, systemic risk measures were introduced in \citet{chen2013axiomatic} as real-valued multivariate functionals in an axiomatic framework. Due to the dimension reduction in mapping a random vector to a scalar, these risk measures do not yield a clear way to allocate capital to the member institutions. In \citet{feinstein2017measures}, systemic risk measures are defined as set-valued risk measures that assign to a random shock the set of all capital allocation vectors that yield acceptable aggregate outcomes for society. The scalar systemic risk measures studied in \citet{biagini2019unified} are closely related to the set-valued ones in \citet{feinstein2017measures} (or their generalizations) through certain scalarization procedures.

While set-valued risk measures can be used for risk quantification purposes in both applications discussed above, a further selection procedure is often necessary to determine a capital requirement or allocation rule that takes values in the set-valued risk measure. For instance, choosing a minimizer of the total capital over the value of a set-valued systemic risk measure would yield a so-called \emph{efficient allocation rule}, see \citet[Definition~3.3]{feinstein2017measures}. In \citet{biagini2020fairness}, \emph{systemic optimal allocations} are defined based on the dual representations of the systemic risk measures in \citet{biagini2019unified}. In general, these allocation rules are vector-valued functionals applied to random vectors, which is essentially a call for building a theory of ``vector-valued risk measures".

Herein, we adopt an axiomatic approach for defining vector-valued risk measures. Specifically, using the componentwise order for vectors, we define vector-valued risk measures as a generalization of monetary risk measures for univariate random variables (see, e.g., \citet[Chapter 4]{fs:sf}). In our definition, in addition to monotonicity, the usual cash-additivity condition (also called translativity) is relaxed to a much weaker condition, referred to as \emph{marginal domination property} (see \Cref{sec:vector}), which also covers cash-subadditivity as a special case. A trivial example of a vector-valued risk measure is the vector of univariate risk measures applied to the components of a random vector. In this case, the risk measure does not consider the dependence between the components of the random vector. Our main theorem is a negative result which states that convex (and lower semicontinuous) vector-valued risk measures do not exist beyond this trivial case.

The result has two important implications for applications. First, it implies that it is impossible to define non-trivial and cash-subadditive capital allocation rules for systemic risk measures without relaxing one of three fundamental properties: monotonicity, convexity, or lower semicontinuity. This is not the case for set-valued risk measures; indeed, in the literature, there are known examples, such as the \emph{market risk measures} studied in \citet{sv-avar}, \citet{svdrm}, that do not fall into this category. We also provide a geometric characterization of a class of convex set-valued risk measures that trivialize into a marginal-based functional.

The second implication is related to dynamic set-valued risk measures, which were introduced in \citet{feinstien-qfin,feinstein-fs}. One benefit of using set-valued functionals in the dynamic setting is that \emph{multiportfolio time-consistency} appears as a natural concept of time-consistency and is formulated using compositions of set-valued functions. Recent literature has shown that this concept also appears in the form of a \emph{dynamic programming principle} for some multivariate dynamic optimization problems when formulated using set-valued functions. In \citet{karnam2017dynamic}, 
\citet{feinstein-scalar}, \citet{kovacova-ts}, such problems have been explored through the lens of a \emph{moving scalarization}, i.e., non-constant scalarization weights over time are needed in order to recover a time-consistent strategy. The conditional version of our main result implies that a vector-valued dynamic risk measure would \emph{not} be suitable as a (nonlinear) dynamic allocation rule as no dependencies between the components of the random vector can be modeled.

The rest of this paper is organized as follows. In \Cref{sec:defn}, we recall the basic properties of monetary (scalar and univariate) risk measures. In \Cref{sec:vector}, we introduce vector-valued risk measures and prove the main result. In \Cref{sec:implication}, we discuss the implications of this result for set-valued risk measures and both static and dynamic capital allocation rules.

\section{Definitions}\label{sec:defn}

We fix the setup and notation for the rest of the paper. Let $N\in\N:=\{1,2,\ldots\}$. For $x=(x_1,\ldots,x_N)^{\T}, y=(y_1,\ldots,y_N)^{\T}\in \R^N$, we write $x\leq y$ if $x_i\leq y_i$ for each $i\in\{1,\ldots,N\}$. Then, we define $\R^N_+:=\{x\in\R^N\mid 0\leq x\}$ and $\R^N_-:=\{x\in\R^N\mid x\leq 0\}$. We fix an arbitrary norm $|\cdot|$ on $\R^N$ and denote by $e_i$ the $i^{\text{th}}$ standard unit vector in $\R^N$ for each $i\in\{1,\ldots,N\}$. 
For nonempty sets $C,D\subseteq \R^N$  and $\lambda\in\R$, we define $C+D:=\{x+y\mid x\in C, y\in D\}$ and $\lambda C:=\{\lambda x\mid x\in C\}$ in the Minkowski sense; for $x\in \R^N$, we simply write $x+D:=\{x\}+D$. For a set $C\subseteq \R^N$, we denote by $\cl C$ and $\interior C$ its closure and interior, respectively, in the usual topology on $\R^N$. We define the support function of $C$ by $\sigma_C(w):=\sup_{x\in C}w^{\T}x$ for every $w\in \R^N$. We denote by $\recc C:=\{y\in \R^N\mid \forall \lambda>0\colon \lambda y + C\subseteq C\}$ the recession cone of $C$, and by $C^\circ:=\{w\in \R^N\mid \forall x\in C\colon w^{\T}x\leq 0\}$ the polar cone of $C$. For a function $f\colon \R^N\to[-\infty,+\infty]$, we define its effective domain by $\dom f:=\{x\in\R^N\mid f(x)<+\infty\}$.

To introduce the probabilistic setup, let us fix a probability space $(\O,\F,\Pr)$. For $A\in\F$, we define its indicator function by $1_A(\o):=1$ if $\o\in A$ and $1_A(\o):=0$ if $\o\in A^c:=\O\setminus A$. We denote by $L^0(\R^N)$ the space of all $\F$-measurable $N$-dimensional random vectors $X=(X_1,\ldots,X_N)^{\T}$ that are identified up to $\Pr$-almost sure equality. Hence, (in)equalities between random vectors are understood in this sense. For each $p\in[1,+\infty)$, we denote by $L^p(\R^N)$ the space of all $X\in L^0(\R^N)$ with $\E[|X|^p]<+\infty$, where $\E[\cdot]$ denotes the expectation operator with respect to $\Pr$. We denote by $L^\infty(\R^N)$ the space of all $X\in L^0(\R^N)$ for which $\norm{X}_\infty:=\inf\{c>0\mid \Pr\{|X|\leq c\}=1\}<+\infty$. 
For a nonempty Borel set $D\subseteq\R^N$, we write $L^p(D):=\{X\in L^p(\R^N)\mid \Pr\{X\in D\}=1\}$ for $p\in\{0\}\cup[1,+\infty]$.

\subsection{Monetary Risk Measures}

We collect some commonly used properties for risk measures in the next definition. We refer the reader to \citet[Chapter~4]{fs:sf} for a detailed treatment of convex risk measures on $L^\infty(\R)$.

\begin{definition}\label{defn:riskmsr}
For a real-valued functional $\rho\colon L^\infty(\R)\to\R$, consider the following properties:
\begin{itemize}
\item \textbf{Monotonicity}: $X\geq Y$ implies $\rho(X)\leq \rho(Y)$ for every $X,Y\in L^\infty(\R)$.
\item \textbf{Cash-subadditivity:} $\rho(X+m) \leq \rho(X)-m$ for every $X\in L^\infty(\R)$, $m\in\R$.
\item \textbf{Cash-additivity:} $\rho(X+m) = \rho(X)-m$ for every $X\in L^\infty(\R)$, $m\in\R$.
\item \textbf{Cash-preserving property:} $\rho(m)=-m$ for every $m\in\R$.
\item \textbf{Convexity:} $\rho(\lambda X+(1-\lambda)Y)\leq \lambda \rho(X)+(1-\lambda)\rho(Y)$ for every $X,Y\in L^\infty(\R)$, $\lambda\in(0,1)$.
\item \textbf{Lower semicontinuity:} $\rho$ is lower semicontinuous with respect to the weak* topology on $L^\infty(\R)$.
\end{itemize}
The functional $\rho$ is called a \textbf{monetary risk measure} if it is monotone and cash-additive.
\end{definition}

\begin{remark}
Herein we consider only finite-valued functionals $\rho$. All subsequent results hold also for proper functionals $\rho$, i.e., $\rho(X) > -\infty$ for every $X \in L^\infty(\R)$ and $\rho(Y) < +\infty$ for some $Y \in L^\infty(\R)$. 
Due to the choice of domain, for a monetary risk measure $\rho$, this is equivalent to \emph{finiteness at zero}, i.e., $\rho(0) \in \R$; often a stronger, \emph{normalization}, property is imposed so that $\rho(0) = 0$.
\end{remark}

\section{Vector-Valued Risk Measures}\label{sec:vector}

We start by generalizing the properties in \Cref{defn:riskmsr} to the vector-valued setting using the componentwise order $\leq$ on $\R^N$. This yields the next definition.

\begin{definition}\label{defn:vector}
For a vector-valued functional $r=(r_1,\ldots,r_N)^{\T}\colon L^\infty(\R^N)\to\R^N$, consider the following properties:
\begin{itemize}
\item \textbf{Monotonicity:} $X\geq Y$ implies $r(X)\leq r(Y)$ for every $X,Y\in L^\infty(\R^N)$.
\item \textbf{Marginal domination property:} For every $i\in\{1,\ldots,N\}$, there exists a function $f_i\colon \R\to\R$ such that $r_i(m)\leq f_i(m_i)$ for every $m\in\R^N$.
\item \textbf{Cash-subadditivity:} $r(X+m)\leq r(X)-m$ for every $X\in L^\infty(\R^N)$, $m\in\R^N$.
\item \textbf{Cash-additivity:} $r(X+m)=r(X)-m$ for every $X\in L^\infty(\R^N)$, $m\in\R^N$.
\item \textbf{Cash-preserving property:} $r(m)=-m$ for every $m\in\R^N$.
\item \textbf{Convexity:} $r(\lambda X+(1-\lambda)Y)\leq \lambda r(X)+(1-\lambda)r(Y)$ for every $X,Y\in L^\infty(\R^N)$, $\lambda\in(0,1)$.
\item \textbf{Positive homogeneity:} $r(\lambda X)=\lambda r(X)$ for every $X\in L^\infty(\R^N)$, $\lambda\geq 0$.
\item \textbf{Lower semicontinuity:} $r_i$ is lower semicontinuous with respect to the weak* topology on $L^\infty(\R^N)$ for every $i\in\{1,\ldots,N\}$.
\item \textbf{Separability:} There exist real-valued functionals $\bar{r}_1,\ldots,\bar{r}_N\colon L^\infty(\R)\to\R$ such that
\[
r(X)=(\bar{r}_1(X_1),\ldots,\bar{r}_N(X_N))^{\T},\quad X\in L^\infty(\R^N).
\]
\end{itemize}
The functional $r$ is called a \textbf{vector-valued risk measure} if it is monotone and has the marginal domination property. A vector-valued risk measure is called \textbf{coherent} if it is convex and positively homogeneous.
\end{definition}

\begin{remark}\label{rem:dom}
In \Cref{defn:vector}, monotonicity, cash-(sub)additivty, cash-preserving property, convexity, and lower semicontinuity are defined as the vectorial extensions of their univariate counterparts in \Cref{defn:riskmsr}. We introduce the marginal domination property as a weaker condition that is implied by cash-(sub)additivity and cash-preserving property. Indeed, when $r$ is cash-subadditive, taking $f_i(m_i):=r_i(0)-m_i$ for each $m_i\in\R$ and $i\in\{1,\ldots,N\}$ yields the marginal domination property. Similarly, when $r$ is cash-preserving, taking $f_i(m_i):=-m_i$ for each $m_i\in\R$ and $i\in\{1,\ldots,N\}$ yields the marginal domination property.
\end{remark}

The next theorem is the main result of the paper. It states that a convex vector-valued risk measure, under a mild continuity assumption, necessarily is separable.

\begin{theorem}\label{thm:implementable}
Let $r\colon L^\infty(\R^N)\to\R^N$ be a vector-valued risk measure that is convex and lower semicontinuous. Then, $r$ is also separable.
\end{theorem}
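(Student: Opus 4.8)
The plan is to establish separability one component at a time by passing to the convex dual, where the marginal domination property becomes an exact support restriction on the dual variables. Fix $i\in\{1,\ldots,N\}$. Since $r$ is $\R^N$-valued, convex, and lower semicontinuous, the scalar functional $r_i\colon L^\infty(\R^N)\to\R$ is finite (hence proper), convex, and lower semicontinuous with respect to the weak* topology $\sigma(L^\infty,L^1)$. By the Fenchel--Moreau theorem, applied in the dual pairing $\langle Y,X\rangle=\E[Y^\T X]$ between $L^\infty(\R^N)$ and $L^1(\R^N)$, I would write
\[
r_i(X)=\sup_{Y\in L^1(\R^N)}\of{\E[Y^\T X]-r_i^*(Y)},\qquad X\in L^\infty(\R^N),
\]
where $r_i^*(Y)=\sup_{X}\of{\E[Y^\T X]-r_i(X)}$ is the conjugate. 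It then suffices to show that every $Y$ in the effective domain of $r_i^*$ is supported on the $i$-th coordinate, i.e., $Y_j=0$ almost surely for all $j\neq i$; for then the right-hand side depends on $X$ only through $X_i$, and $\bar r_i(Z):=r_i(Z e_i)$ yields $r_i(X)=\bar r_i(X_i)$, which is exactly separability.

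The two constraints on $\dom r_i^*$ come from monotonicity and from the marginal domination property, respectively. First I would use monotonicity: for fixed $X$, any $H\in L^\infty(\R^N)$ with $H\geq 0$, and $\lambda>0$, the inequality $r_i(X+\lambda H)\leq r_i(X)$ gives $r_i^*(Y)\geq \E[Y^\T X]+\lambda\,\E[Y^\T H]-r_i(X)$, so letting $\lambda\to\infty$ forces $\E[Y^\T H]\leq 0$ for all $H\geq 0$, whence $Y\leq 0$ almost surely for every $Y\in\dom r_i^*$. Second, I would feed the domination property into the conjugate by restricting the defining supremum to deterministic arguments $m\in\R^N$ and using $r_i(m)\leq f_i(m_i)$:
\[
r_i^*(Y)\;\geq\;\sup_{m\in\R^N}\of{\sum_{k} m_k\,\E[Y_k]-f_i(m_i)}.
\]
Fixing a coordinate $j\neq i$, letting $m_j$ range over $\R$, and keeping the other coordinates at $0$ gives $r_i^*(Y)\geq \sup_{s\in\R} s\,\E[Y_j]-f_i(0)$, which equals $+\infty$ unless $\E[Y_j]=0$. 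Hence $\E[Y_j]=0$ for every $j\neq i$ and every $Y\in\dom r_i^*$.

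Combining the two constraints finishes the argument: for $Y\in\dom r_i^*$ and $j\neq i$ we have $Y_j\leq 0$ almost surely together with $\E[Y_j]=0$, which forces $Y_j=0$ almost surely. Consequently $\E[Y^\T X]=\E[Y_i X_i]$ for every $Y\in\dom r_i^*$, the representation collapses to a supremum involving $X_i$ alone, and separability follows as above with each $\bar r_i$ real-valued because $r_i$ is finite. I expect the main difficulty to be one of care rather than depth: one must verify that $r_i$ is genuinely weak*-lower semicontinuous and proper so that Fenchel--Moreau applies verbatim, and confirm that the supremum in the conjugate may legitimately be restricted to the deterministic subset to extract the one-sided bound. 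It is worth noting that a direct primal attempt---showing $t\mapsto r_i(X+t e_j)$ is constant---stalls, since the domination property bounds $r_i$ only from above and a convex nonincreasing function bounded above need not be constant; the dual route is precisely what converts that one-sided bound, in tandem with the sign constraint from monotonicity, into the exact support restriction $Y_j=0$.
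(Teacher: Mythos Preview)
Your proof is correct and follows essentially the same route as the paper: apply Fenchel--Moreau to each component $r_i$, use monotonicity to force the dual variable $Y$ into $L^1(\R^N_-)$, use the marginal domination property (restricted to constants) to force $\E[Y_j]=0$ for $j\neq i$, and combine to get $Y_j=0$ a.s. The only cosmetic differences are that the paper phrases the monotonicity step via the support function of the cone $L^\infty(\R^N_+)$ and also records the (unused) constraint $\E[Y_i]\in\dom f_i^*$, neither of which affects the argument.
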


\begin{proof}
Let $i\in\{1,\ldots,N\}$. The assumptions on $r$ ensure that $r_i\colon L^\infty(\R^N)\to\R$ is a proper, convex, and lower semicontinuous functional. Hence, by Fenchel-Moreau theorem \citep[Theorem~2.3.3]{zalinescu}, it coincides with its biconjugate, i.e.,
\begin{equation}\label{eq:biconj}
r_i(X)=\sup_{U\in L^1(\R^N)}\of{\E[U^{\T}X]-r_i^\ast(U)},\quad X\in L^\infty(\R^N),
\end{equation}
where $r_i^\ast(U)\coloneqq \sup_{X\in L^\infty(\R^N)}(\E[U^{\T}X]-r_i(X))$ for every $U\in L^1(\R^N)$.
Let $U\in L^1(\R^N)$. Note that we have
\[
r_i^\ast(U)\geq \sup_{X\in L^\infty(\R^N_+)}\of{\E[U^{\T}X]-r_i(X)}\geq \sup_{X\in L^\infty(\R^N_+)}\E[U^{\T}X]-r_i(0),
\]
where the first inequality is by the definition of $r_i^\ast(U)$ and the second inequality is by monotonicity. Being the support function of the cone $L^\infty(\R^N_+)$, the last supremum is equal to $0$ if $U\in L^1(\R^N_-)$ and to $+\infty$ otherwise. Hence, $r_i^\ast(U)=+\infty$ if $U\notin L^1(\R^N_-)$. On the other hand,
\begin{align*}
r_i^\ast(U)&\geq \sup_{m\in\R^N}\of{\E[U]^{\T}m-r_i(m)}\\
&\geq \sup_{m\in\R^N}\of{\E[U]^{\T}m-f_i(m_i)}\\
&=\sup_{m_i\in\R}(\E[U_i]m_i-f_i(m_i))+\sum_{j\in\{1,\ldots,N\}\setminus\{i\}}\sup_{m_j\in\R}\E[U_j]m_j\\
&=f_i^\ast(\E[U_i])+\sum_{j\in\{1,\ldots,N\}\setminus\{i\}}\sup_{m_j\in\R}\E[U_j]m_j,
\end{align*}
where the first inequality is by the definition of $r_i^\ast(U)$, the second one is by the marginal domination property, and $f_i^\ast(v):=\sup_{r\in\R}(vr-f_i(r))$ for every $v\in\R$. Note that, for each $j\in\{1,\ldots,N\}\setminus\{i\}$, we have 
\[
\sup_{m_j\in\R}\E[U_j]m_j=\begin{cases}0 & \text{if }\E[U_j]=0,\\ +\infty & \text{else}.
\end{cases}
\]
Hence, $r_i^\ast(U)=+\infty$ if $\E[U_i]\notin \dom f_i^\ast$ or $\E[U_j]\neq 0$ for some $j\in\{1,\ldots,N\}\setminus\{i\}$.
Consequently, we may refine \eqref{eq:biconj} as
\begin{equation}\label{eq:biconj2}
r_i(X)=\sup_{U\in \mathcal{U}_i}\of{\E[U^{\T}X]-r_i^\ast(U)},\quad X\in L^\infty(\R^N),
\end{equation}
where $\mathcal{U}_i\coloneqq \{U\in L^1(\R^N_-)\mid \E[U_i]\in\dom f_i^\ast,\ \forall j\in\{1,\ldots,N\}\setminus\{i\}\colon \E[U_j]=0\}$. Note that every $U\in \mathcal{U}_i$ necessarily has $U_j=0$ for each $j\in\{1,\ldots,N\}\setminus\{i\}$ as a nonpositive random variable with zero expectation is zero almost surely. Therefore, \eqref{eq:biconj2} reduces to
\[
r_i(X)=\bar{r}_i(X_i)\coloneqq \sup_{\substack{U_i\in L^1(\R_-)\colon\\ \E[U_i]\in\dom f_i^\ast}}\of{\E[U_iX_i]-r_i^\ast(U_ie_i)},\quad X\in L^\infty(\R^N).
\]
This shows that $r$ is separable.
\end{proof}

\begin{remark}\label{rem:Lp}
Although we formulate our definitions and \Cref{thm:implementable} within the $L^\infty(\R^N)$ framework, the result can also be shown for a functional $r\colon L^p(\R^N)\to \R^N$ that satisfies monotonicity, marginal domination property, convexity, and lower semicontinuity with respect to the strong topology on $L^p(\R^N)$, where $p\in[1,+\infty)$. The proof would follow the same lines of arguments as above except that the dual elements $U$ are from the space $L^q(\R^N)$, where $q\in (1,+\infty]$ is such that $\frac{1}{p}+\frac{1}{q}=1$.
\end{remark}

\begin{remark}\label{rem:molchanov}
It is worth noting that \Cref{thm:implementable} works under the marginal domination property introduced in \Cref{defn:vector}, which is much weaker than cash-subadditivity. In \citet[Theorem~A.1]{molchanov:marginal}, a similar conclusion is obtained for vector-valued sublinear expectations, which can be identified as coherent risk measures $r\colon L^\infty(\R^N)\to\R^N$ that are cash-preserving, which are in turn cash-additive. Hence, our result relaxes the latter condition by the marginal domination property and removes the positive homogeneity assumption.
\end{remark}

The next corollary states \Cref{thm:implementable} for the cash-subadditive case.

\begin{corollary}
Let $r\colon L^\infty(\R^N)\to\R^N$ be a monotone, cash-(sub)additive functional that is convex and lower semicontinuous. Then, $r$ is also separable.
\end{corollary}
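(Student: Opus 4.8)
The plan is to derive this corollary directly from \Cref{thm:implementable}, since the hypotheses of the corollary are a special case of those in the theorem. The only property that differs between the two statements is that the corollary assumes cash-(sub)additivity in place of the marginal domination property, so the single task is to verify that cash-(sub)additivity implies the marginal domination property; monotonicity, convexity, and lower semicontinuity transfer verbatim.

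First I would reduce the cash-additive case to the cash-subadditive one. Cash-additivity requires $r(X+m)=r(X)-m$ for every $X\in L^\infty(\R^N)$ and $m\in\R^N$, which trivially entails the inequality $r(X+m)\leq r(X)-m$; hence a cash-additive functional is in particular cash-subadditive, and it suffices to treat the cash-subadditive case.

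Next, following the computation already recorded in \Cref{rem:dom}, I would extract the marginal domination property from cash-subadditivity. Evaluating the cash-subadditivity inequality at $X=0$ yields $r(m)\leq r(0)-m$ for every $m\in\R^N$, which componentwise reads $r_i(m)\leq r_i(0)-m_i$. Defining $f_i\colon\R\to\R$ by $f_i(m_i):=r_i(0)-m_i$, this is precisely the assertion that $r_i(m)\leq f_i(m_i)$ for every $m\in\R^N$ and each $i\in\{1,\ldots,N\}$, so $r$ has the marginal domination property and is therefore a vector-valued risk measure in the sense of \Cref{defn:vector}.

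Finally, with monotonicity and the marginal domination property established, and convexity and lower semicontinuity assumed by hypothesis, the functional $r$ satisfies every requirement of \Cref{thm:implementable}, which immediately yields separability. As this argument is a direct specialization of the main theorem, I do not expect any genuine obstacle; the only point requiring care is the elementary observation that cash-(sub)additivity is strong enough to furnish the marginal domination property, which is exactly the content of \Cref{rem:dom}.
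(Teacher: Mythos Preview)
Your proposal is correct and mirrors the paper's own proof exactly: the paper simply invokes \Cref{rem:dom} to obtain the marginal domination property from cash-(sub)additivity and then applies \Cref{thm:implementable}. Your write-up just unpacks the content of \Cref{rem:dom} explicitly, which is fine.
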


\begin{proof}
This follows directly from \Cref{rem:dom} and \Cref{thm:implementable}.
\end{proof}

\begin{remark}\label{rem:monotonicity}
\cref{thm:implementable} holds when all inequalities within \cref{defn:vector} are defined component-wise, i.e., with respect to the positive orthant ordering cone $\R^N_+$. 
As discussed previously, we are motivated in this work by the systemic risk measures of, e.g., \citet{feinstein2017measures}, in which each institution corresponds to a different component of the vector; for that setting, the positive orthant is the natural ordering cone to avoid implicit subsidies across firms.
If a different ordering cone $C \supsetneq \R^N_+$ were chosen, then the separability result no longer holds as presented and lacks a clear financial interpretation.
\end{remark}

\section{Implications of the Main Result}\label{sec:implication}

\subsection{Set-Valued Risk Measures and Capital Allocation Rules}\label{sec:svrm}

\Cref{thm:implementable} states that, under a mild continuity assumption, convex vector-valued risk measures only exist in the trivial coordinate-based form given by separability. In particular, such vector-valued risk measures do not take into account the dependence structure between the components of the input random vector. To allow for the use of dependence structures, one can use set-valued risk measures (see, e.g., \citet{hh}, \citet{hhr}) as an alternative. We recall their definition for convenience and propose possible definitions of separability for them. For convenience, let $\mathcal{J}(\R^N_+)$ denote the set of all closed convex sets $C\subseteq\R^N$ such that $\recc C=\R^N_+$.

\begin{definition}\label{defn:set-valued}
For a set-valued functional $R\colon L^\infty(\R^N)\rightrightarrows \R^N$ consider the following properties:
\begin{itemize}
\item \textbf{Monotonicity:} $X\geq Y$ implies $R(X)\supseteq R(Y)$ for every $X,Y\in L^\infty(\R^N)$.
\item \textbf{Cash-additivity:} $R(X+m)= R(X)-m$ for every $X\in L^\infty(\R^N)$, $m\in \R^N$.
\item \textbf{Finiteness at zero:} $R(0)\notin\{\emptyset,\R^N\}$.
\item \textbf{Convexity:} $R(\lambda X+(1-\lambda) Y)\supseteq \lambda R(X)+(1-\lambda)R(Y)$ for every $X,Y\in L^\infty(\R^N)$, $\lambda\in(0,1)$.
\item \textbf{Closedness:} The graph $\{(X,m)\in L^\infty(\R^N)\times\R^N\mid m\in R(X)\}$ is closed with respect to the product of the weak* topology on $L^\infty(\R^N)$ and the usual topology on $\R^N$.
\item \textbf{Vector-basedness:} There exist a lower semicontinuous 
functional $r\colon  L^\infty(\R^N) \to \R^N$ and a set $C\in\mathcal{J}(\R^N_+)$ such that
\[
R(X)=r(X)+C,\quad X\in L^\infty(\R^N).
\]
\item \textbf{Separability:} There exist set-valued functionals $\bar{R}_1,\ldots,\bar{R}_N\colon L^\infty(\R) \rightrightarrows \R$ with nonempty compact values and a set $C\in \mathcal{J}(\R^N_+)$ such that 
$\min \bar{R}_1,\ldots,\min\bar{R}_N\colon L^\infty(\R)\to \R$ are lower semicontinuous and 
\[
R(X)=\bar{R}_1(X_1)\times\ldots\times\bar{R}_N(X_N)+C,\quad X\in L^\infty(\R^N).
\]
\item \textbf{Strong separability:} There exist lower semicontinuous 
functionals $\bar{r}_1,\ldots,\bar{r}_N\colon L^\infty(\R)\to \R$ and a set $C\in \mathcal{J}(\R^N_+)$ such that
\[
R(X)= (\bar{r}_1(X_1),\ldots,\bar{r}_N(X_N))^{\T}+C, \quad X\in L^\infty(\R^N).
\]
\end{itemize}
The functional $R$ is called a \textbf{set-valued risk measure} if it satisfies monotonicity, cash-additivity, and finiteness at zero.
\end{definition}

\begin{remark}\label{rem:upperset}
For a closed convex set-valued risk measure $R$, it is easy to verify that $R(X)\notin\{\emptyset,\R^N\}$ is a closed convex set and $R(X)=R(X)+\R^N_+$ for every $X\in L^\infty(\R^N)$. 
\end{remark}

Vector-basedness is the property that the set-valued functional has a ``point plus a fixed set" structure. Separability and strong separability are two possible extensions of the similar property for vector-valued functionals. As a consequence of \Cref{thm:implementable}, the next theorem shows that having a ``point plus a fixed set" structure is equivalent to both of these separability properties for convex set-valued risk measures.

\begin{theorem}\label{prop:set-valued}
Let $R\colon L^\infty(\R^N)\rightrightarrows\R^N$ be a convex set-valued risk measure. Then, vector-basedness, separability, and strong separability are equivalent for $R$. In this case, $R$ is also closed.
\end{theorem}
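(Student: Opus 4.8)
The plan is to prove the three equivalences by exploiting two features of the class $\mathcal{J}(\R^N_+)$: every $C\in\mathcal{J}(\R^N_+)$ is upward closed, i.e. $\R^N_++C=C$ (because $\recc C=\R^N_+$), and it has trivial lineality space, $\recc C\cap(-\recc C)=\R^N_+\cap\R^N_-=\{0\}$. The second fact yields the cancellation rule $a+C\subseteq b+C\iff a-b\in\R^N_+\iff a\geq b$, and in particular $a+C=b+C\iff a=b$, so the ``point" in a vector-based or strongly separable representation is unique. Concretely, I would establish separability $\Leftrightarrow$ strong separability, then strong separability $\Rightarrow$ vector-basedness, then vector-basedness $\Rightarrow$ strong separability, and finally closedness.

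I would first dispatch the reductions that do not use convexity. Strong separability implies separability by taking the singleton-valued maps $\bar R_i:=\{\bar r_i\}$, whose minima are the lsc functionals $\bar r_i$. For the converse, given a separable representation set $\bar r_i:=\min\bar R_i$, which is real-valued (the values $\bar R_i(X_i)$ are nonempty compact) and lsc by hypothesis, and show $\bar R_1(X_1)\times\cdots\times\bar R_N(X_N)+C=(\bar r_1(X_1),\ldots,\bar r_N(X_N))^{\T}+C$: the inclusion ``$\supseteq$" holds because the vector of minima lies in the product, and ``$\subseteq$" because any $y$ in the product satisfies $y-(\bar r_i(X_i))_i\in\R^N_+$, whence $y+C\subseteq(\bar r_i(X_i))_i+\R^N_++C=(\bar r_i(X_i))_i+C$. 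This gives strong separability. Strong separability then implies vector-basedness by taking $r(X):=(\bar r_1(X_1),\ldots,\bar r_N(X_N))^{\T}$, which is lsc because each component $X\mapsto\bar r_i(X_i)$ is a composition of the continuous coordinate projection with the lsc $\bar r_i$.

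The main step is vector-basedness $\Rightarrow$ strong separability, and this is where the real work lies. Writing $R(X)=r(X)+C$ with $r$ lsc, I would transfer the axioms of $R$ to $r$ via the cancellation rule: monotonicity of $R$ gives $r(Y)+C\subseteq r(X)+C$, hence $r(X)\leq r(Y)$ when $X\geq Y$; cash-additivity gives $r(X+m)+C=(r(X)-m)+C$, hence $r(X+m)=r(X)-m$, so $r$ is cash-additive and therefore (by \Cref{rem:dom}) has the marginal domination property; and convexity of $R$ together with $\lambda C+(1-\lambda)C=C$ gives $r(\lambda X+(1-\lambda)Y)+C\supseteq(\lambda r(X)+(1-\lambda)r(Y))+C$, hence $r(\lambda X+(1-\lambda)Y)\leq\lambda r(X)+(1-\lambda)r(Y)$. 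Thus $r$ is a convex, lsc, marginally dominated vector-valued risk measure, and \Cref{thm:implementable} applies to give $r(X)=(\bar r_1(X_1),\ldots,\bar r_N(X_N))^{\T}$; the scalar components $\bar r_i$ are lsc (the biconjugate formula in the proof of \Cref{thm:implementable} exhibits each $\bar r_i$ as a supremum of weak*-continuous affine functionals, or equivalently one embeds $L^\infty(\R)\ni X_i\mapsto X_ie_i\in L^\infty(\R^N)$ and uses lsc of $r_i$). This yields strong separability. The main obstacle is precisely this transfer: it hinges on $C$ having trivial lineality, so that set inclusions between translates of $C$ encode the componentwise order, which is exactly what lets the set-valued axioms descend to the a priori merely lsc selector $r$ and makes the scalar result of \Cref{thm:implementable} applicable.

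It remains to show that a vector-based $R$ is closed. Since $\recc C=\R^N_+$, the support function $\sigma_C$ is finite only on $(\R^N_+)^\circ=\R^N_-$, and membership is characterized by $z\in C\iff w^{\T}z\leq\sigma_C(w)$ for all $w\in\dom\sigma_C\subseteq\R^N_-$. Hence $m\in R(X)$ iff $w^{\T}m-\sum_{i=1}^N w_i\bar r_i(X_i)\leq\sigma_C(w)$ for every such $w$. For fixed $w\in\dom\sigma_C$, the left-hand side is, as a function of $(X,m)$, the sum of the continuous map $m\mapsto w^{\T}m$ and the functional $X\mapsto\sum_{i=1}^N(-w_i)\bar r_i(X_i)$, which is lsc because $-w_i\geq0$ and each $\bar r_i$ is lsc; so each constraint defines a set closed in the product of the weak* and usual topologies, and the graph of $R$ is the intersection of these sets over $w\in\dom\sigma_C$, hence closed.
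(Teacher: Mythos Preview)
Your proof is correct and follows essentially the same route as the paper: use a cancellation property for translates of $C\in\mathcal{J}(\R^N_+)$ to transfer the set-valued axioms to the selector $r$, then invoke \Cref{thm:implementable}. Your sub-arguments for separability $\Rightarrow$ strong separability (direct inclusion via upward closedness of $C$) and for closedness (writing the graph as an intersection of sublevel sets of lsc functionals indexed by $w\in\dom\sigma_C$) are slightly more elementary than the paper's, which uses support-function identities and an external reference, but the structure is identical. One small imprecision: the cancellation rule $a+C\subseteq b+C\iff a-b\in\R^N_+$ does not follow from \emph{trivial lineality} alone but from the fact that, for closed convex $C$, $v+C\subseteq C$ is equivalent to $v\in\recc C=\R^N_+$ (this is exactly the content of the paper's \Cref{lem:recc}); trivial lineality only gives the equality case $a+C=b+C\Rightarrow a=b$.
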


The proof of \Cref{prop:set-valued} relies on the following ``cancellation lemma".

\begin{lemma}\label{lem:recc}
Let $C\in \mathcal{J}(\R^N_+)$ and $x,y\in\R^N$. Then, $x+C\supseteq y+C$ implies $x\leq y$.
\end{lemma}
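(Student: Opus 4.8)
The plan is to reduce the set containment to a statement about the recession cone of $C$. First I would rewrite the hypothesis: subtracting $x$ from both sides, $x + C \supseteq y + C$ is equivalent to $(y - x) + C \subseteq C$. Writing $d := y - x$, the goal becomes to show that the single inclusion $d + C \subseteq C$ forces $d \in \R^N_+$. Since $\R^N_+ = \recc C$ by the assumption $C \in \mathcal{J}(\R^N_+)$, it suffices to prove that $d$ belongs to the recession cone of $C$ in the sense of the definition $\recc C = \{v \mid \forall \lambda > 0 \colon \lambda v + C \subseteq C\}$.

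Next I would upgrade the single inclusion $d + C \subseteq C$ to the whole family of inclusions required by that definition. Iterating gives $nd + C \subseteq C$ for every $n \in \N$ by an immediate induction, since $nd + C = d + \of{(n-1)d + C} \subseteq d + C \subseteq C$. To fill in the non-integer scalings, I would invoke convexity of $C$: for any $c \in C$, any $n \in \N$, and any $\lambda \in [0,n]$, the point $\lambda d + c$ equals the convex combination $(1 - \tfrac{\lambda}{n})\, c + \tfrac{\lambda}{n}\,(nd + c)$ of the two points $c \in C$ and $nd + c \in C$, and hence lies in $C$. Given an arbitrary $\lambda \geq 0$, choosing $n \in \N$ with $n \geq \lambda$ then yields $\lambda d + C \subseteq C$. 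As $\lambda > 0$ was arbitrary, this is precisely $d \in \recc C = \R^N_+$, i.e. $y - x \geq 0$, which is the desired conclusion $x \leq y$.

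The argument is short, and its only real content is the observation that one inclusion $d + C \subseteq C$ already propagates to all nonnegative multiples of $d$; this is where convexity of $C$ (rather than its closedness) does the work, so I expect no serious obstacle. The one point to keep an eye on is that the hypothesis $\recc C = \R^N_+ \neq \R^N$ rules out $C = \emptyset$, since the empty set satisfies the recession-cone inclusions vacuously for every $v$; thus $C$ is genuinely nonempty and the interpolation above is not vacuous.
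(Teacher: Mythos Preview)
Your argument is correct, and it follows a genuinely different route from the paper. The paper argues via duality: it invokes results from Rockafellar to show that $\dom\sigma_C$ has interior $\R^N_{--}$, and then, for each $w\in\R^N_{--}$, reads off $w^{\T}x\geq w^{\T}y$ from $\sigma_{x+C}(w)\geq\sigma_{y+C}(w)$ after cancelling the finite quantity $\sigma_C(w)$. You instead work primally: you translate the inclusion to $d+C\subseteq C$ with $d=y-x$, iterate to integer multiples, interpolate to all $\lambda\geq 0$ using convexity, and conclude $d\in\recc C=\R^N_+$. Your approach is more elementary and self-contained; it avoids support functions and the cited Rockafellar theorems entirely, and in fact it never uses the closedness of $C$ (only convexity, nonemptiness, and $\recc C=\R^N_+$). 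The paper's dual approach, by contrast, leans on closedness through the identity $\cl\dom\sigma_C=(\recc C)^\circ$, but has the advantage of fitting naturally with the support-function machinery used elsewhere in Section~\ref{sec:svrm}. Your remark that $\recc C=\R^N_+\neq\R^N$ forces $C\neq\emptyset$ is correct under the paper's definition of the recession cone and is the right way to secure nonemptiness.
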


\begin{proof}
By \citet[Corollary~14.2.1]{rockafellar}, we have $\dom \sigma_C\subseteq \cl \dom \sigma_C = (\recc C)^\circ =\R^N_-$. Moreover, by \citet[Theorem~6.3]{rockafellar}, $\interior \dom\sigma_C=\interior \cl \dom \sigma_C = \interior \R^N_-=:\R^N_{--}$. For each $w\in \R^N_{--}$, we have
\[
w^{\T}x+\sigma_{C}(w)=\sigma_{x+C}(w)\geq \sigma_{y+C}(w)=w^{\T}y+\sigma_C(w),
\]
which implies $w^{\T}x\geq w^{\T}y$ since $w\in \R^N_{--}\subseteq \dom\sigma_C$. Hence, $x\leq y$.
\end{proof}

\begin{proof}[Proof of \Cref{prop:set-valued}]
Clearly, strong separability implies both vector-basedness and separability.

Suppose that $R$ is vector-based with $r, C$ as in the property. Let $X,Y\in L^\infty(\R^N)$, $m\in\R^N$, $\lambda \in (0,1)$. If $X\geq Y$, then the monotonicity of $R$ implies that $r(X)+C\supseteq r(Y)+C$, which yields $r(X)\leq r(Y)$ by \Cref{lem:recc}. By the cash-additivity of $R$, we have $r(X+m)+C=r(X)-m+C$, which yields $r(X+m)=r(X)-m$ by \Cref{lem:recc}. By the convexity of $R$, we have 
\[
r(\lambda X+(1-\lambda)Y)+C\supseteq \lambda r(X)+(1-\lambda)Y+\lambda C+(1-\lambda)C,
\]
which yields $r(\lambda X+(1-\lambda)Y)\leq \lambda r(X)+(1-\lambda )r(Y)$ by the convexity of $C$ and \Cref{lem:recc}. Hence, $r$ is monotone, cash-additive, and convex. 
Since $r$ is also lower semicontinuous, by \Cref{thm:implementable}, $r$ is separable, which implies that $R$ is strongly separable. Moreover, let $w\in \R^N_-\setminus\{0\}$. Then, the function
\[
X\mapsto \sigma_{R(X)}(w)=-\sum_{i=1}^N (-w_i)r_i(X)+\sigma_C(w)
\]
is upper semicontinuous on $L^\infty(\R^N)$ as the sum of finitely many lower semicontinuous functions is lower semicontinuous. By \citet[Proposition~4.9, Proposition~4.23(b)]{setoptsurvey}, it follows that $R$ is closed.

Finally, suppose that $R$ is separable with $\bar{R}_1,\ldots,\bar{R}_N, C$ as in the property. For each $i\in\{1,\ldots,N\}$, let $\bar{r}_i:=\min \bar{R}_i$. Then, for each $X\in L^\infty(\R^N)$, we have
\begin{align*}
\sigma_{R(X)}(w)&=\sum_{i=1}^N w_i\min \bar{R}_i(X_i) + \sigma_C(w)\\
&=w^{\T}(\bar{r}_1(X_1),\ldots,\bar{r}_N(X_N))^{\T}+\sigma_C(w)\\
&=\sigma_{(\bar{r}_1(X_1),\ldots,\bar{r}_N(X_N))^{\T}+C}(w)
\end{align*}
for every $w\in \R^N_-\setminus \{0\}$, which shows that $R(X)=(\bar{r}_1(X_1),\ldots,\bar{r}_N(X_N))^{\T}+C$ as both sets are closed and convex. Hence, $R$ is strongly separable.
\end{proof}

Thanks to \Cref{prop:set-valued}, a convex and closed set-valued risk measure that does not collapse to a vector-valued functional does not ignore the dependence structure between the components, at least for some choices of the input vector $X\in L^\infty(\R^N)$. Such set-valued risk measures are constructed in \citet{sv-avar}, \citet{svdrm} under the names ``market extensions" and ``market risk measures" in multi-asset markets with either proportional or convex transaction costs. Vector-based convex and closed set-valued risk measures with $C=\R^N_+$ can also be written in the form $R(X)=\bar{R}_1(X_1)\times\ldots\times\bar{R}_1(X_N)$, where $\bar{R}_1,\ldots,\bar{R}_N\colon L^\infty(\R)\rightrightarrows\R$ are convex and closed set-valued functions. The set-valued entropic risk measure in \citet{svdrm}, on the other hand, is vector-based with a set $C\in\mathcal{J}(\R^N_+)$ that is different from $\R^N_+$; an example of such a risk measure is provided in \cref{ex:entropic} below.

\begin{remark}
For a convex set-valued risk measure $R$, an alternative continuity condition that is stronger than closedness is the following:
\begin{itemize}
\item \textbf{Convex upper continuity:} $R^{-1}(D):=\{X\in L^\infty(\R^N)\mid R(X)\cap D\neq \emptyset\}$ is closed with respect to the weak* topology on $L^\infty(\R^N)$ for every closed convex set $D\subseteq \R^N$ such that $D=D+\R^N_-$.
\end{itemize}
When $R$ is convex upper continuous, for every $w\in\R^N_-\setminus\{0\}$, the scalarization $X\mapsto \sigma_{R(X)}(w)$ is upper semicontinuous with respect to the weak* topology on $L^\infty(\R^N)$ by \citet[Proposition~A.1.3]{supermtg}. In this case, if there exist a vector-valued functional $r\colon L^\infty(\R^N)\to\R^N$ and a set $C\in\mathcal{J}(\R^N_+)$ with $\dom \sigma_C = \R^N_-$ such that $R(X)=r(X)+C$ for every $X\in L^\infty(\R^N)$, then it follows that $X\mapsto w^{\T}r(X)$ is upper semicontinuous for every $w\in\R^N_-\setminus\{0\}$, which implies that $r$ is lower semicontinuous. While this yields an alternative notion of vector-basedness where the lower semicontinuity of $r$ is a direct consequence of the convex upper continuity of $R$, as discussed in \Cref{ex:entropic} below, the assumption $\dom \sigma_C = \R^N_-$ is generally not satisfied by the set-valued entropic risk measure, which is known to be vector-based.
\end{remark}

\begin{example}\label{ex:entropic}
Let $K\subseteq\R^N$ be a closed convex set with $K=K+\R^N_+$ and suppose that $0$ is a boundary point of $K$. Let $\beta\in \R^N_{++}$. For each $i\in\{1,\ldots,N\}$ and $X_i\in L^\infty(\R)$, let $\bar{r}_i(X_i):=\frac{1}{\beta_i}\log(\E[e^{-\beta_i X_i}])$. The set-valued entropic risk measure (see \citet[Proposition~4.1]{svdrm}) is given by $R(X)=(\bar{r}_1(X_1),\ldots,\bar{r}_N(X_N))^{\T}+C$, $X\in L^\infty(\R^N)$, where
\[
C:=\cb{-\of{\frac{\log(1-\beta_1 x_1)}{\beta_1},\ldots,\frac{\log(1-\beta_N x_{N})}{\beta_N}}^{\T}\mid x\in K \cap \bigtimes_{i=1}^N \of{-\infty,\frac{1}{\beta_i}} }.
\]
Since $\bar{r}_1,\ldots,\bar{r}_N$ are univariate entropic risk measures, they are lower semicontinuous. Moreover, it is easy to verify that $C\in\mathcal{J}(\R^N_+)$. However, it is possible that $\dom \sigma_C\subsetneq \R^N_{-}$. Indeed, let us assume that $N=2$ and $K=\{x\in\R^2\mid x_1+x_2\geq 0\}$. In this case, for each $w\in \R^N_{-}$, the calculation of the support function simplifies to
\[
\sigma_C(w)= \sup\cb{-\frac{w_1}{\beta_1}\log(y_1)-\frac{w_2}{\beta_2}\log(y_2)\mid \frac{1-y_1}{\beta_1}+\frac{1-y_2}{\beta_2}\geq 0,\ y_1>0,\ y_2>0}
\]
and it can be verified that the functional inequality constraint in the above problem is satisfied with equality at optimality. Then, replacing this inequality with an equality and rewriting the problem yields
\[
\sigma_C(w)=\sup_{y_1>0}\of{-\frac{w_1}{\beta_1}\log(y_1)-\frac{w_2}{\beta_2}\log\of{1+\frac{\beta_2}{\beta_1}(1-y_1)}}.
\]
Hence, when $w=-e_1$, we immediately get $\sigma_C(-e_1)=\sup_{y_1>0}\frac{1}{\beta_1}\log(y_1)=+\infty$. Hence, $-e_1\notin\dom\sigma_C$ and we get $\dom\sigma_C=\R^2_{--}$ by symmetry.
\end{example}

We note that set-valued risk measures have also been used as systemic risk measures in financial networks as introduced in \citet{feinstein2017measures} (see also \citet{biagini2019unified}, \citet{ararat2020dual}). In this setting, for a random shock $X\in L^\infty(\R^N)$ that affects a network with $N$ banks, the set $R(X)$ consists of capital allocation vectors that yield an acceptable aggregate outcome. Then, a main issue is to find a plausible rule that tells how to allocate capital for all banks depending on the random shock. This is formalized in the next definition.

\begin{definition}\label{defn:CAR}
Let $R\colon L^\infty(\R^N)\rightrightarrows \R^N$ be a set-valued risk measure. A vector-valued functional $r\colon L^\infty(\R^N)\to\R^N$ is called a \textbf{capital allocation rule} for $R$ if $r(X)\in R(X)$ for every $X\in L^\infty(\R^N)$.
\end{definition}

\begin{corollary}\label{cor:acceptable}
Let $R\colon L^\infty(\R^N)\rightrightarrows \R^N$ be a set-valued risk measure. If $r\colon L^\infty(\R^N)\to\R^N$ is a capital allocation rule for $R$ that is a convex and lower semicontinuous vector-valued risk measure, then $r$ is separable.
\end{corollary}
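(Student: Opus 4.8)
The plan is to recognize the corollary as a direct specialization of \Cref{thm:implementable}, so the only real task is to confirm that the theorem's hypotheses are in force. By assumption $r$ is a vector-valued risk measure, hence monotone and enjoying the marginal domination property by \Cref{defn:vector}, and it is additionally convex and lower semicontinuous. These are exactly the hypotheses of \Cref{thm:implementable}, so I would invoke that theorem to conclude at once that $r$ is separable.

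The one point worth pausing on is the role of the capital allocation rule hypothesis $r(X)\in R(X)$. I would observe that this condition is not logically needed for the deduction: monotonicity and the marginal domination property are already built into the notion of a vector-valued risk measure, while convexity and lower semicontinuity are assumed outright, so \Cref{thm:implementable} applies regardless of whether $r$ is a selection from $R$. Its purpose is instead interpretive---it situates $r$ as a selection from the set-valued risk measure $R$, which is precisely the systemic-risk setting in which the negative conclusion (that no dependence between the components can be captured) carries weight.

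I do not anticipate a genuine obstacle. The only tempting detour would be to try to \emph{derive} the marginal domination property from the capital allocation rule condition together with the cash-additivity and finiteness-at-zero of $R$, using $r(m)+m\in R(0)$ and the fact that $R(0)$ is a proper, closed, convex, upward-closed set (\Cref{rem:upperset}). This derivation fails in general, however: already for $N=2$ with $R(0)=\{z\in\R^2\mid z_1+z_2\geq 0\}$ one controls only a linear combination of the coordinates of $r(m)$ rather than any single coordinate, so $\sigma_{R(0)}(-e_i)=+\infty$ and no per-coordinate bound $r_i(m)\leq f_i(m_i)$ can be extracted. This reinforces that the marginal domination property must be assumed, as it is, and that the cleanest and correct route is the immediate appeal to \Cref{thm:implementable}.
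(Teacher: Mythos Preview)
Your proposal is correct and matches the paper's own proof, which is simply ``This is an immediate consequence of \Cref{thm:implementable}.'' Your additional observation that the capital allocation rule hypothesis plays only an interpretive role, and your side remark on why it cannot substitute for the marginal domination property, are accurate but go beyond what the paper records.
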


\begin{proof}
This is an immediate consequence of \Cref{thm:implementable}.
\end{proof}

In view of \Cref{cor:acceptable}, there does \emph{not} exist a capital allocation rule that is a convex and lower semicontinuous vector-valued risk measure which depends on the copula associated with the random shock. This calls for relaxing the monotonicity or convexity requirements on the capital allocation rule, as most constructions in the literature do (e.g., via the dual representation as in~\citet{biagini2020fairness}). 

\subsection{Dynamic Risk Measures and Time-Consistency}

To discuss the implication of \Cref{thm:implementable} on dynamic time-consistent allocation rules, we first generalize this result to the conditional setting. To that end, let $\mathcal{G}$ be a sub-$\sigma$-algebra of $\mathcal{F}$. We denote by $L^\infty_\G(\R^N)$ the set of all $\G$-measurable random vectors in $L^\infty(\R^N)$. The next definition is the conditional analogue of \Cref{defn:vector}.

\begin{definition}\label{defn:conditional}
For a vector-valued functional $r=(r_1,\ldots,r_N)^{\T}\colon L^\infty(\R^N)\to L^\infty_{\mathcal{G}}(\R^N)$, consider the following properties:
\begin{itemize}
\item \textbf{Monotonicity:} $X\geq Y$ implies $r(X)\leq r(Y)$ for every $X,Y\in L^\infty(\R^N)$.
\item \textbf{Locality:} $r(1_A X+1_{A^c}Y)=1_A r(X)+1_{A^c}r(Y)$ for every $X,Y\in L^\infty(\R^N)$, $A\in\G$.
\item \textbf{Conditional marginal domination property:} For every $i\in\{1,\ldots,N\}$, there exists a functional $f_i\colon L^\infty_{\G}(\R)\to L^\infty_{\G}(\R)$ such that $r_i(m)\leq f_i(m_i)$ for every $m\in L^\infty_{\G}(\R^N)$.
\item \textbf{Conditional cash-subadditivity:} $r(X+m)\leq r(X)-m$ for every $X\in L^\infty(\R^N)$, $m\in L^\infty_{\G}(\R^N)$.
\item \textbf{Conditional cash-additivity:} $r(X+m)=r(X)-m$ for every $X\in L^\infty(\R^N)$, $m\in L^\infty_{\G}(\R^N)$.
\item \textbf{Conditional cash-preserving property:} $r(m)=-m$ for every $m\in L^\infty_{\G}(\R^N)$.
\item \textbf{Convexity:} $r(\lambda X+(1-\lambda)Y)\leq \lambda r(X)+(1-\lambda)r(Y)$ for every $X,Y\in L^\infty(\R^N)$, $\lambda\in(0,1)$.
\item \textbf{Fatou property:} For every bounded sequence $(X^n)_{n\in\N}$ in $L^\infty(\R^N)$ that converges to some $X\in L^\infty(\R^N)$ almost surely, it holds $r_i(X)\leq \liminf_{n\rightarrow\infty} r_i(X^n)$ for every $i\in\{1,\ldots,N\}$. 
\item \textbf{Conditional separability:} There exist functionals $\bar{r}_1,\ldots,\bar{r}_N\colon L^\infty(\R)\to L^\infty_{\G}(\R)$ such that
\[
r(X)=(\bar{r}_1(X_1),\ldots,\bar{r}_N(X_N))^{\T},\quad X\in L^\infty(\R^N).
\]
\end{itemize}
The functional $r$ is called a \textbf{vector-valued conditional risk measure} if it satisfies monotonicity, locality, and the conditional marginal domination property. 
\end{definition}

\begin{remark}\label{rem:locality}
Similar to the univariate case (see \citet[Exercise~11.1.2]{fs:sf}), it can be checked that a functional $r\colon L^\infty(\R^N)\to L^\infty_{\G}(\R^N)$ satisfying monotonicity, conditional cash-additivity, and $r(0)=0$ also satisfies locality. Since we assume the conditional marginal domination property in the definition of vector-valued conditional risk measure, which is weaker than conditional cash-additivity, we include locality as an additional property in the definition.
\end{remark}

We state the conditional version of \Cref{thm:implementable} next.

\begin{theorem}\label{thm:cond}
Let $r\colon L^\infty(\R^N)\to L^\infty_{\G}(\R^N)$ be a vector-valued conditional risk measure that is convex and has the Fatou property. Then, $r$ is also conditionally separable.
\end{theorem}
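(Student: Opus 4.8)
The plan is to replay the proof of \Cref{thm:implementable} line for line, replacing the unconditional duality by its conditional ($L^0_\G$-module) counterpart: ordinary expectations $\E[\,\cdot\,]$ become conditional expectations $\E[\,\cdot\mid\G]$, the scalar field $\R$ is replaced by $L^\infty_\G(\R)$, and suprema over dual elements become conditional essential suprema. Fix $i\in\{1,\ldots,N\}$. Convexity and the Fatou property make $r_i\colon L^\infty(\R^N)\to L^\infty_\G(\R)$ a proper, convex functional that is lower semicontinuous in the conditional sense (the Fatou property is the conditional analogue of weak* lower semicontinuity), while locality makes $r_i$ an $L^\infty_\G(\R)$-local map. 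First I would invoke a conditional Fenchel--Moreau theorem from module-based convex analysis to obtain
\[
r_i(X)=\esssup_{U\in L^1(\R^N)}\of{\E[U^{\T}X\mid\G]-r_i^\ast(U)},\quad X\in L^\infty(\R^N),
\]
where the $\G$-measurable conditional conjugate $r_i^\ast(U):=\esssup_{X\in L^\infty(\R^N)}\of{\E[U^{\T}X\mid\G]-r_i(X)}$ takes values in the $\G$-measurable extended reals $L^0_\G((-\infty,+\infty])$.

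Next I would carry out the two dual-domain reductions. Monotonicity gives, for every $U$,
\[
r_i^\ast(U)\geq \esssup_{X\in L^\infty(\R^N_+)}\E[U^{\T}X\mid\G]-r_i(0),
\]
and the conditional support function of the cone $L^\infty(\R^N_+)$ equals $0$ almost surely when $U\in L^1(\R^N_-)$ and $+\infty$ on a $\G$-set of positive probability otherwise (testing with $X=c\,1_B e_j$ for $B\in\F$ on which some $U_j>0$ and letting $c\to\infty$). Hence $r_i^\ast(U)=+\infty$ on a positive-probability set whenever $U\notin L^1(\R^N_-)$. For the second reduction, restricting the conjugate to $\G$-measurable arguments $m\in L^\infty_\G(\R^N)$ and using that conditional expectation takes out what is $\G$-measurable, $\E[U^{\T}m\mid\G]=\sum_{j=1}^N\E[U_j\mid\G]\,m_j$, the conditional marginal domination property yields
\[
r_i^\ast(U)\geq f_i^\ast(\E[U_i\mid\G])+\sum_{j\neq i}\esssup_{m_j\in L^\infty_\G(\R)}\E[U_j\mid\G]\,m_j,
\]
where $f_i^\ast$ is the $\G$-measurable conditional conjugate of $f_i$. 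Each off-diagonal essential supremum is $0$ almost surely if $\E[U_j\mid\G]=0$ and $+\infty$ on a positive-probability set otherwise.

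Combining the two reductions, the effective dual set is $\mathcal{U}_i=\{U\in L^1(\R^N_-)\mid \E[U_i\mid\G]\in\dom f_i^\ast,\ \E[U_j\mid\G]=0 \text{ for } j\neq i\}$. For such $U$, nonpositivity together with $\E[U_j\mid\G]=0$ forces $U_j=0$ almost surely for every $j\neq i$, since a nonpositive random variable with vanishing conditional expectation is almost surely zero. Therefore the representation collapses to
\[
r_i(X)=\bar r_i(X_i):=\esssup_{\substack{U_i\in L^1(\R_-)\colon\\ \E[U_i\mid\G]\in\dom f_i^\ast}}\of{\E[U_iX_i\mid\G]-r_i^\ast(U_ie_i)},\quad X\in L^\infty(\R^N),
\]
which depends on $X$ only through $X_i$ and defines $\bar r_i\colon L^\infty(\R)\to L^\infty_\G(\R)$. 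This is precisely conditional separability.

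The main obstacle is the first step: making the conditional Fenchel--Moreau representation rigorous in the $L^0_\G$-module setting and verifying that the Fatou property delivers the closedness (conditional lower semicontinuity) it requires. A related, genuinely conditional subtlety --- absent from the static proof --- is that $r_i^\ast(U)$ may be finite on one $\G$-set and $+\infty$ on another, so ``discarding'' a dual element must be done $\G$-locally rather than globally. Here locality is indispensable: it both justifies pulling $\G$-measurable factors through $\E[\,\cdot\mid\G]$ in the conjugate and guarantees that the effective domain of $r_i^\ast$ is stable under pasting along $\G$, so that the coordinatewise domain restrictions above remain valid on every $\G$-measurable piece and the reduced essential supremum still reproduces $r_i$.
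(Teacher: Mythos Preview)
Your route differs substantially from the paper's. You attempt to lift the unconditional proof of \Cref{thm:implementable} directly to the $L^0_\G$-module setting, invoking a conditional Fenchel--Moreau theorem and working with $\G$-measurable conjugates and essential suprema throughout. The paper instead \emph{reduces} to the unconditional case: it defines $\tilde r(X):=\E[r(X)]\in\R^N$, verifies that $\tilde r$ inherits monotonicity, the marginal domination property, and convexity, and uses the Fatou property together with Krein--\v{S}mulian to obtain weak* lower semicontinuity of each $\tilde r_i$. Then \Cref{thm:implementable} applies verbatim, so $\tilde r$ is separable. Finally, locality is used in a short contradiction argument to upgrade separability of $\tilde r$ to conditional separability of $r$: if $X_i=Y_i$ but $\Pr\{r_i(X)>r_i(Y)\}>0$, set $A_i:=\{r_i(X)>r_i(Y)\}\in\G$ and compute, via locality and the separability of $\tilde r$, that $\E[\max\{r_i(X),r_i(Y)\}]=\tilde r_i(1_{A_i}X+1_{A_i^c}Y)=\hat r_i(1_{A_i}X_i+1_{A_i^c}Y_i)=\hat r_i(Y_i)=\E[r_i(Y)]$, forcing $\Pr(A_i)=0$.

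The paper's approach buys simplicity and self-containedness: no module-theoretic duality is needed, and the only analytic tool beyond \Cref{thm:implementable} is Krein--\v{S}mulian. Your approach, if carried out, would give a more intrinsic conditional argument and an explicit conditional dual representation of each $\bar r_i$, but at the cost of importing the conditional Fenchel--Moreau machinery and checking that the stated Fatou property delivers the conditional closedness it requires. You correctly flag this as the main obstacle, and the $\G$-local handling of the dual domain you sketch (with locality providing stability under pasting) is the right idea; the paper simply sidesteps all of it by passing to expectations.
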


\begin{proof}
Let us define a vector-valued functional $\tilde{r}\colon L^\infty(\R^N)\to\R^N$ by
\[
\tilde{r}(X):= \E[r(X)],\quad X\in L^\infty(\R^N).
\]
Clearly, $\tilde{r}$ is monotone and convex. Fix $i\in\{1,\ldots,N\}$ and let $f_i\colon L^\infty_{\G}(\R)\to L^\infty_{\G}(\R)$ be such that $r_i(m)\leq f_i(m_i)$ for each $m\in L^\infty_{\G}(\R^N)$. Then, $\tilde{r}_i(m)=\E[r_i(m)]\leq \E[f_i(m_i)]=:\tilde{f}_i(m_i)$ for every $m\in \R^N$. Hence, $\tilde{r}$ has the conditional marginal domination property. We claim that $\tilde{r}$ is lower semicontinuous. Let $i\in\{1,\ldots,N\}$, $a,b\geq 0$ and define
\[
\mathcal{A}_i^{a,b}:=\{X\in L^\infty(\R^N)\mid \tilde{r}_i(X)\leq a,\ \norm{X}_\infty\leq b\}.
\]
Let $(X^n)_{n\in\N}$ be a sequence in $\mathcal{A}_i^{a,b}$ that converges to some $X\in L^1(\R^N)$ in $L^1(\R^N)$. Then, there exists a subsequence $(X^{n_k})_{k\in\N}$ that converges to $X$ almost surely. In particular, $\norm{X}_\infty\leq b$. Moreover, $r_i(X)\leq \liminf_{k\rightarrow\infty}r_i(X^{n_k})$ by the Fatou property of $r$. Then, by monotonoicity and Fatou's lemma for expectations,
\[
\tilde{r}_i(X)=\E[r_i(X)]\leq \E\sqb{\liminf_{k\rightarrow\infty}r_i(X^{n_k})}\leq \liminf_{k\rightarrow\infty}\E[r_i(X^{n_k})]=\liminf_{k\rightarrow\infty}\tilde{r}_i(X^{n_k})\leq a.
\]
Hence, $X\in \mathcal{A}_i^{a,b}$ so that $\mathcal{A}_i^{a,b}$ is closed in $L^1(\R^N)$. Then, as a consequence of Krein-\v{S}mulian theorem (see, e.g., \citet[Lemma~A.68]{fs:sf}), $\{X\in L^\infty(\R^N)\mid \tilde{r}_i(X)\leq a\}$ is weak* closed in $L^\infty(\R^N)$. Hence, $\tilde{r}$ is lower semicontinuous.

Therefore, by \Cref{thm:implementable}, $\tilde{r}$ is separable, i.e., there exist functionals $\hat{r}_1,\ldots,\hat{r}_N \colon L^\infty(\R) \to \R $ such that
\[
\tilde{r}(X)=(\hat{r}_1(X_1),\ldots,\hat{r}_N(X_N))^{\T},\quad X\in L^\infty(\R^N).
\]
We claim that $r$ is conditionally separable. To get a contradiction, suppose that there exist $i\in \{1,\ldots,N\}$ and $X, Y\in L^\infty(\R^N)$ such that $X_i=Y_i$ and $\Pr\{r_i(X)>r_i(Y)\}>0$. Let $A_i:=\{r_i(X)>r_i(Y)\}\in \G$. Note that $\max\{r_i(X),r_i(Y)\}=1_{A_i}r_i(X)+1_{A_i^c}r_i(Y)$. Then,
\begin{align*}
\E[\max\{r_i(X),r_i(Y)\}]&= \E[1_{A_i}r_i(X)+1_{A_i^c}r_i(Y)]\\
&=\E[r_i(1_{A_i} X+1_{A_i^c}Y)]=\tilde{r}_i(1_{A_i}X+1_{A_i^c}Y)\\
&=\hat{r}_i(1_{A_i}X_i+1_{A_i^c}Y_i)=\hat{r}_i(Y_i)=\tilde{r}_i(Y)=\E[r_i(Y)],
\end{align*}
where the passage to the second line is by locality. Recalling $\max\{r_i(X),r_i(Y)\}\geq r_i(Y)$, we conclude that $\max\{r_i(X),r_i(Y)\}=r_i(Y)$, which implies that $\Pr(A_i)=0$, a contradiction.
\end{proof}

In a dynamic framework, consider a filtration $(\F_t)_{t\in\mathbb{T}}$ on $(\O,\F)$, where $\mathbb{T}:=\{0,\ldots,T\}$ for some $T\in\N$ (or $\mathbb{T}:=[0,T]$ for some $T>0$) and $\F_T=\F$. For each $t\in \mathbb{T}$, let $r^t\colon L^\infty(\R^N)\to L^\infty_{\F_t}(\R^N)$ be a convex vector-valued conditional risk measure with the Fatou property. Similar to \Cref{defn:CAR}, one can ask $r^t$ to be a capital allocation rule for a corresponding set-valued conditional risk measure (see \citet[Definition~2.5]{feinstien-qfin}). Then, a natural property for the dynamic family $(r^t)_{t\in\mathbb{T}}$ is time-consistency in a certain sense, e.g., as having $r^s(X)=r^s(-r^t(X))$ for every $s<t$ in $\mathbb{T}$ and $X\in L^\infty(\R^N)$. However, \Cref{thm:cond} implies that $(r^t)_{t\in\mathbb{T}}$ would not be suitable as a dynamic time-consistent capital allocation rule due to conditional separability.

\section*{Acknowledgments}
The first author is partially supported by T\"{U}B{\.I}TAK (Scientific \& Technological Research Council of Turkey), Project No. 123F357.

\bibliographystyle{plainnat}
\bibliography{arxiv-SC}

\end{document}